\newcommand{\notsure}[1]{\textcolor{black}{#1}}
\newcommand{\ds}[1]{\textcolor{black}{#1}}
\newcommand{\ve}[1]{\boldsymbol{#1}}
\begin{document}


\title{Socially Optimal Coexistence of Wireless Body Area Networks Enabled by a Non-Cooperative Game}
\author{JIE DONG, DAVID B. SMITH and LEIF W. HANLEN
\affil{National ICT Australia (NICTA $^\dag$), Australian National University}}


\begin{abstract}
In this paper, we enable the coexistence of multiple wireless body area networks (BANs) using a finite repeated non-cooperative game for transmit power control. With no coordination amongst these personal sensor networks, the proposed game maximizes each network's packet delivery ratio (PDR) at low transmit power. In this context we provide a novel utility function, which gives reduced benefit to players with higher transmission power, and a subsequent reduction in radio interference to other coexisting BANs. Considering the purpose of inter-BAN interference mitigation, PDR is expressed as a compressed exponential function of inverse signal-to-interference-and-noise ratio (SINR), so it is essentially a function of transmit powers of all coexisting BANs. It is shown that a unique Nash Equilibrium (NE) exists, and hence there is a subgame-perfect equilibrium, considering best-response at each stage independent of history. In addition, the NE is proven to be the socially optimal solution across all action profiles. Realistic and extensive on- and inter-body channel models are employed. Results confirm the effectiveness of the proposed scheme in better interference management, greater reliability and reduced transmit power, when compared with other schemes that can be applied in BANs.
\end{abstract}

\category{C.2.1}{Network Architecture and Design}{Wireless communication}\category{C.4}{Performance of Systems}{Performance attributes; Reliability, availability, and serviceability}

\terms{Algorithm, Design, performance}

\keywords{Body area networks, distributed and collaborative signal processing, energy and resource management, game theory, interference mitigation}

\acmformat{Jie Dong, David B. Smith, Leif W. Hanlen, 2014. Socially Optimal Coexistence of Wireless Body Area Networks Enabled by a Non-Cooperative Game.}

\begin{bottomstuff}
$^\dag$NICTA is funded by the Australian Government as represented by the Department of Broadband, Communications and the Digital Economy and the Australian Research Council through the ICT Centre of Excellence program.

Authors' address: NICTA, Tower A, 7 London Circuit, Canberra, ACT 2601, Australia.

Author's email addresses: \{Jie.Dong; David.Smith; Leif.Hanlen\}@nicta.com.au;
\end{bottomstuff}

\maketitle

\section{Introduction}

Over the last decade, the urgent concerns for public health and physical well-being has led to a dramatic development of various types of wearable technologies. Among them, wireless body area networks (BANs), also known as wireless body sensor networks, are very promising because of their affordability, flexibility and convenience. A typical BAN comprises several on-body or implanted sensors that monitor physiological parameters and a gateway device, which is connected to the internet \cite{tg6_d}. In the medical domain, this architecture ensures the user's information is kept up-to-date at \ds{their} health service\ds{s} centre, which benefits comprehensive healthcare. BANs are also used in areas such as consumer fitness, entertainment, gaming and the military.

The increasing popularity of BAN has resulted in a rapid growth in active devices in the last five years \cite{Lewis2008,ABIResearch}. Considering the typical circumstance of using BANs, it is often necessary to have several BAN systems operating in close proximity to each other. However, the transmission power of BAN sensor nodes is strictly limited so as to prolong their \ds{operation life-}time \cite{tg6_d}, which however makes the system vulnerable to radio interference from other BANs. Therefore, BANs coexistence and the resultant inter-BAN interference is a major issue, which can cause severe performance degradation and packet loss as shown in \cite{6346463,6133624,6216832}. In addition, inter-BAN interference is a cause of energy wastage of sensor nodes trying to compete for better signal-to-interference-plus-noise ratios (SINRs).

There have been many studies on effective interference management schemes in wireless networks \cite{douros2011review}. Many techniques involves transmission power control that is based on a centralized \cite{Zander1992_centralised,Wu1999_centralised} and partially distributed \cite{Zander1992_PartialSIR_balancing} approach. They are effective in the context of cellular networks and general large-scale ad-hoc networks \cite{lin2006atpc} as these networks have fewer resource constraints and a stabler network topology. However, these methods are difficult to apply to BANs due to their high mobility. In \cite{Lee&Lin1996_SIRBalancing} and \cite{SINR_balancing}, fully distributed power control schemes were proposed, which we refer to as SINR-balancing in this paper. SINR balancing works well for distributed cellular mobile systems, but their performance is unknown for BANs. More recently, game-theoretic based resource allocation schemes have been widely proposed for various types of wireless networks, from spectrum allocation \cite{Candogan2010,Seneviratne2011} to transmit power control algorithms \cite{Ozel2009}.

To better model the scenario of multiple BANs coexistence as shown in Fig.\ref{fig: Multiple WBAN} and considering the difficulty in finding a global coordinator among BANs, they are modeled as rational players competing for resources (common channel) in a non-cooperative game. We employ a local non-cooperative game-based power control scheme at each BAN. It uses a novel price-dependent utility function to determine \ds{the} next superframe's transmission power\ds{, which provides a unique} Nash Equilibrium. As the IEEE 802.15.6 standard clearly specifies a maximum packet error rate of 10\%, packet delivery ratio (PDR) is used as the target utility. Hence, by maximizing the utility function, a higher PDR can be achieved. The instantaneous PDR is expressed as a compressed exponential function of SINR, which is essentially a function of all transmit powers of coexisting BANs. Obviously, by raising its transmit power, a self-interested BAN can achieve a better utility outcome if the other BANs keep their transmit powers unchanged. However, if every BANs in the range do so, both local and aggregate utilities get worse due to larger interference experienced. Therefore, a proper pricing function is employed to restrain this behavior by penalizing utilities of BANs with high transmit power. In \cite{Kazemi}, Kazemi also presented an interference mitigation scheme using game-based power control games for BANs with limited cooperation amongst networks over static channels. In our work, the dynamic body movement and shadowing  are taken into consideration, which is typically experienced on BANs\cite{Zhang2009_channelStability,Smith2009,Smith:AT:2011,Cotton2006}, and no coordination exists amongst networks.

\begin{figure}
\centerline{\includegraphics[width=0.7\columnwidth]{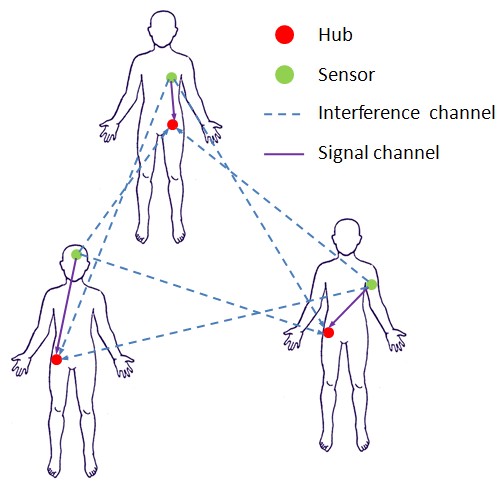}}
\caption{Code before preprocessing.}
\label{fig: Multiple WBAN}
\end{figure}

This paper makes the following novel contributions:
\begin{itemize}
\item We introduce a non-cooperative game theoretic transmit power control scheme for BANs to maximize the local packet delivery ratio across multiple BANs while reducing average transmit power.
\item We model the packet delivery ratio of a typical BAN system in terms of instantaneous SINR using a compressed exponential function.
\item The performance of the proposed power control scheme is evaluated using realistic on- and inter-body time-selective channel models, which is the typical operating environment for the BANs.
\item \ds{The implementation of this game results in a unique Nash equilibrium that additionally provides a socially optimal outcome across all coexisting BANs.}
\end{itemize}
 The rest of this paper is organized as follows. In Section II, the details of the system setup and channel models are explained. Section III defines the non-cooperative game for multiple BANs coexistence, describes our novel utility function \ds{and shows the unique Nash equilibrium, as well as the socially optimal outcome of the game}. In Section IV, the performance of this power control method is assessed and compare with other schemes. Finally, in Section V some concluding remarks are made.

\section{System model}
\subsection{Table of notations}
Table \ref{table: parameters} lists all symbols used in this paper.

\begin{table}
\tbl{Notations used in this paper\label{table: parameters}}{
  \begin{tabular}{|c|l|}
    \hline
    Notation & Meaning of the notation \\ \hline
    $T_d$ & Time length of superframe \\ \hline
    $T_{idle,i}$ & Time length of the idle period between two consecutive superframes for BAN i \\ \hline
    $T_{beacon,i}$ & Time between two consecutive beacons for BAN i \\ \hline
    $N_c$ & Number of orthogonal channels used in the inter-BAN TDMA scheme \\ \hline
    M & Total number of BANs coexisting in close proximity (including both active and idle BANs)\\ \hline
    m & Number of BANs transmitting concurrently \\ \hline
    $S,\ S_{active},\ S_{idle}$ & Set of all coexisting BANs; Set of all BANs transmitting concurrently; Set of all idle BANs\\ \hline
    $p_i,\ p_{-i}$ & Transmission power of a sensor in BAN i; Transmission power of all other active BANs except BAN i\\ \hline
    $P_i^{min},\ P_i^{max}$ & Minimum and maximum transmission power of sensors in BAN i\\ \hline
    $h_i^i$ & On-body channel gain between sensor and hub in BAN i\\ \hline
    $h_j^i$ & Inter-body channel gain between sensor in BAN j and hub in BAN i\\ \hline
    $\sigma$ & Average additive white Gaussian noise power \\ \hline
    $\gamma_i$ & SINR over a packet received at the hub of BAN i \\ \hline
    pdr & Packet delivery rate \\ \hline
    $a_c,\ b_c,\ a,\ b$ & Parameters of the SINR vs. PDR approximation and simplified approximation \\ \hline
    $dist_o$,\ $dist_{ij}$ & Reference distance; distance between subjects (BANs) i and j \\ \hline
    $(x_i,\ y_i)$ & Coordinates of BAN i  \\ \hline
    $U(\cdot)$ & Utility function of the proposed cooperative power control game \\ \hline
    $\sum\limits_{i=1}^m U_i(\cdot)$ & Social welfare\\ \hline
    $w_i,\ v_i,\ d_i$ & Parameters of the utility function\\ \hline
\end{tabular}}
\end{table}

\subsection{BANs coexistence based on probability of overlapping}
We consider multiple subjects in the proximity of each other, each wearing a typical star-topology BAN. Time division multiple access (TDMA) is employed as the intra-BAN access scheme. Here, the period of time, between hub broadcasting a beacon and all sensors completing transmission in a round-robin fashion, is defined to be a superframe $T_d$. It is followed by an idle period $T_{dile}$, during which sensors are inactive and waiting for next beacon. In terms of the inter-BAN access scheme, a unsynchronized TDMA scheme is used as we assume no coordination exists amongst networks \cite{Zhang2010}. In this concept, the channel is temporally divided into $N_c$ orthogonal channels (time slots) and each time slot has a length of $T_d$. BAN $i$ chooses its transmission starting time of a superframe independently and randomly following a uniform distribution over [0, $(N_c-1)T_d$].

\begin{align}
    &T_{beacon,i} = T_d + T_{idle,i},\\
    &Pr(T_{idle,i} = t) = \begin{cases}
         \ \frac{1}{(N_c-1)T_d}, &t\in[0, (N_c-1)T_d] \\
         \ 0, &t\ \mathrm{otherwise} \end{cases},
\end{align}
where $T_{beacon,i}$ is the time between two consecutive beacons from the hub of BAN $i$. Here, we assume $S_M$ is the set of all M BANs located in close proximity, which consists of a subset of all active BANs $S_{active}$ and a subset of all idle BANs $S_{idle}$.
\begin{align}
    &S = S_{active} \cup S_{idle}, \\
    &and\ |S|= |S_{active}| + |S_{idle}| = M .\nonumber
\end{align}
where $|\cdot|$ represents the number of element in a set. Therefore, the probability of $m$ BANs transmitting concurrently $Pr(|S_{active}| = m)$ is calculated as follow:
\begin{align}
	&Pr(|S_{active}| = m) = \binom{M}{m}\left(\frac{2}{N_c}\right)^m\left(1-\frac{2}{N_c}\right)^{M-m}, \label{equ:coexistence} \\
    &\textrm{where}\ m \in [1,M], N_c \ge 2\nonumber
\end{align}
The probability of a BAN actively transmitting with respect to any other BANs is $\frac{2}{N_c}$ due to the lack of synchronization of inter-BAN TDMA scheme. Fig.\ref{fig: Coexisting probability} shows the probability of $m$ BANs transmitting concurrently with 8 BANs $(M = 8)$ coexisting. Different colors correspond to the varying number of orthogonal channels available. In this study, the proposed power control game is simulated over many occasions with different channels, wherein each occasion the value of $m$ is chosen randomly following the probability distribution \{$Pr(|S_{active}| = m)$\}. By introducing this probability distribution, it also models the mobility of BANs for multiple networks coexistence as BANs can leave and also enter the area of interest.

\begin{figure}[t!]
\centerline{\includegraphics[width=0.8\columnwidth]{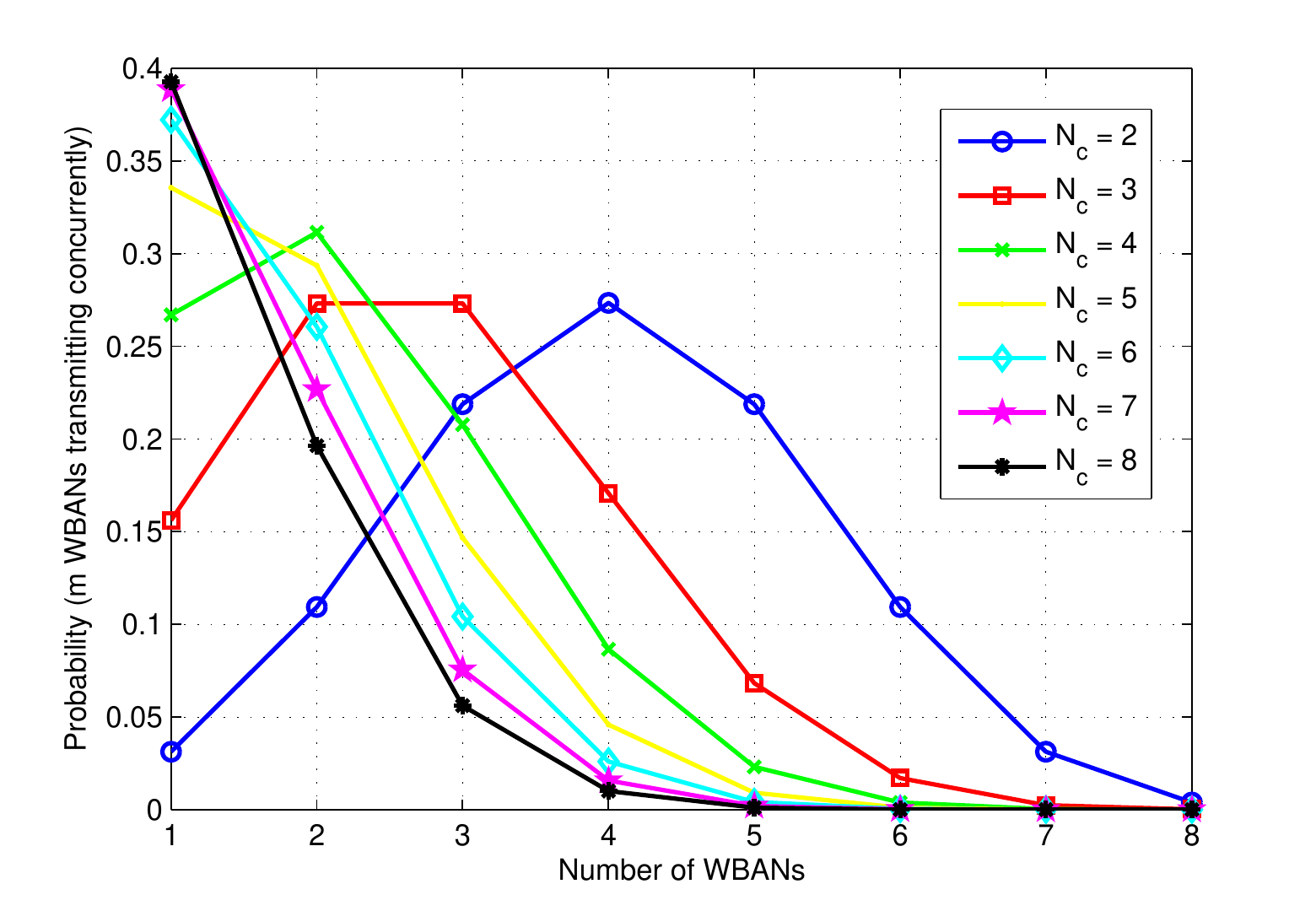}}
\caption{Multiple BANs coexistence probability}
\label{fig: Coexisting probability}
\end{figure}

\subsection{SINR-based packet delivery ratio}
At any time, a sensor in each BAN transmits concurrently with sensor nodes in other $m-1$ BANs. Hence, for each BAN, the hub receives not only the signal packet from its own sensor node, but also $m-1$ interfering signal packets. Therefore, the SINR over a signal packet at the hub of BAN $i$, $\gamma_i(\tau)$, is calculated as follows:

\begin{align}
	&\gamma_i(\tau) = \frac{p_i(\tau)|h_i^i|^2}{\sum_{j=1,j\ne i}^m p_j(\tau)|h_j^i|^2+\sigma^2},
\label{equ:SINR}
\end{align}
where $p_i(\tau)$ is the transmission power of a sensor in the $i$th BAN at time $\tau$; $h_j^i$ represents the average channel gain across a packet time from the sensor in BAN $j$ to the hub in BAN $i$, in other words, the interference channel from interferer $j$ to network $i$. In terms of $h_i^i$, it is the average on-body channel gain from the sensor to its connected hub in BAN $i$ in the same time interval.

\begin{figure}[t!]
\centerline{\includegraphics[width=0.8\columnwidth]{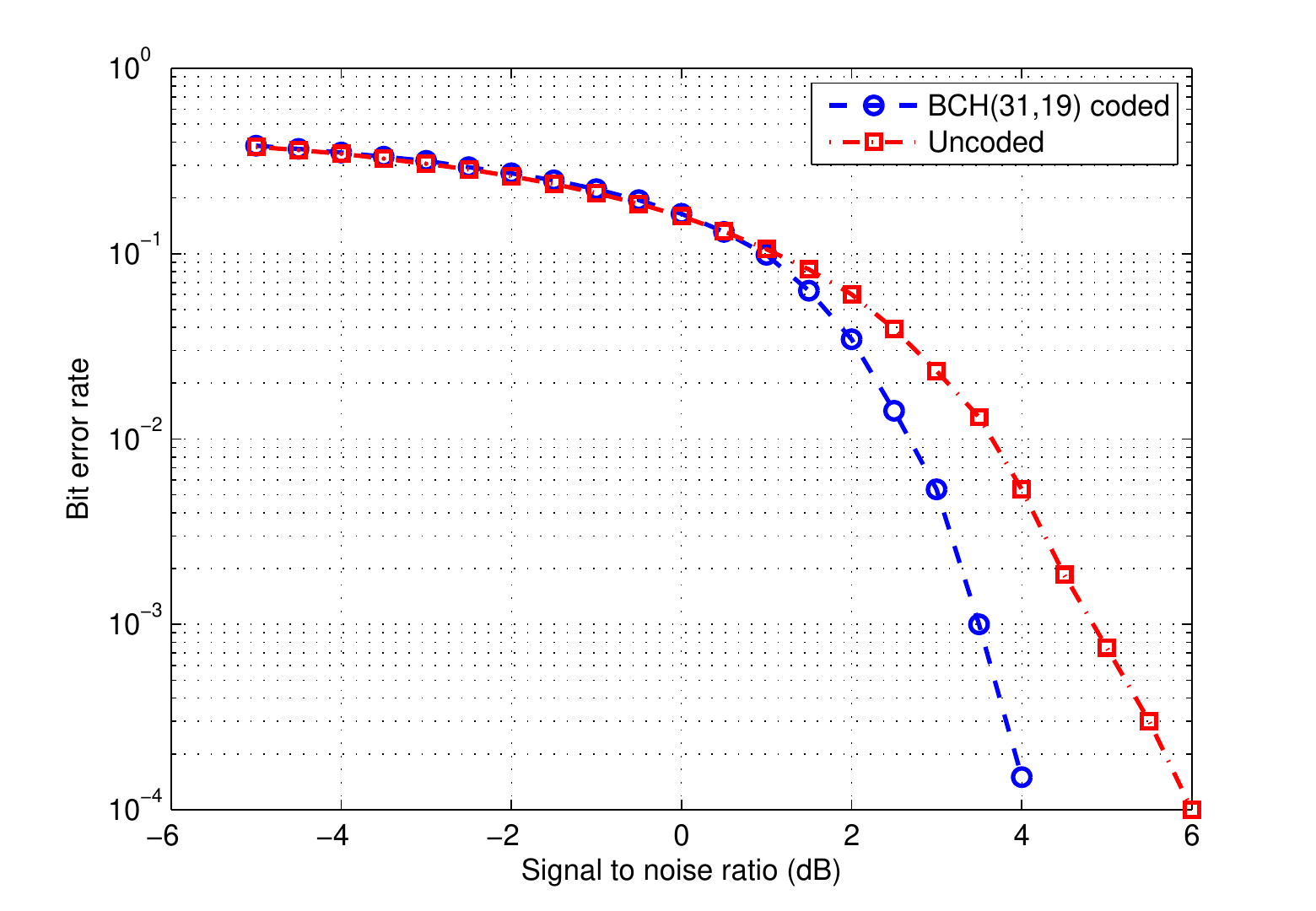}}
\caption{BCH coding gain}
\label{fig: BCH Coding Gain}
\end{figure}

Observing that the graph of general PDR vs. SINR is a sigmoidal function, it is possible to express the PDR as a compressed exponential function of inverse SINR, $1/\gamma$ \cite{SmithTVT}. In (\ref{equ:PDR}), $\gamma$ is calculated as (\ref{equ:SINR}) and $a_c$ and $b_c$ are constant parameters depends on particular modulation, coding scheme and packet length. Complying with the IEEE 802.15.6 standard \cite{tg6_d}, BCH(31,19) coding and DPSK/BPSK modulation scheme are applied with a packet length of 256 bytes. It is found that BCH(31,19) provides about 2 dB channel coding gain as shown in Fig.\ref{fig: BCH Coding Gain}, this advantage is considered when estimating the PDR vs. SINR relation.\footnote{BCH(31,19) is used as an example here. There are other non-IEEE 802.15.6 compliant coding schemes which can provide higher coding gain.} With a root-mean-square error of the approximation less than 0.006, Fig.\ref{fig: SINR fit} shows the comparison between approximated and simulated PDR vs. SINR relation for DPSK and BPSK respectively. For later simplicity of analysis, we can rearrange the equation to be expressed as (\ref{equ:simplePDR}), where $a = -(1/a_c)^{b_c}$ and $b = -b_c$. The values of $a$ and $b$ are given as in Table \ref{parameters} for both DPSK and BPSK.

\begin{align}
	pdr &= \exp\left(-\left(\frac{1}{\gamma a_c}\right)\right)^{b_c} \label{equ:PDR}\\
     &= \exp\left(a \gamma^b\right), \label{equ:simplePDR}
\end{align}

\begin{figure}[t!]
\centerline{\includegraphics[width=0.8\columnwidth]{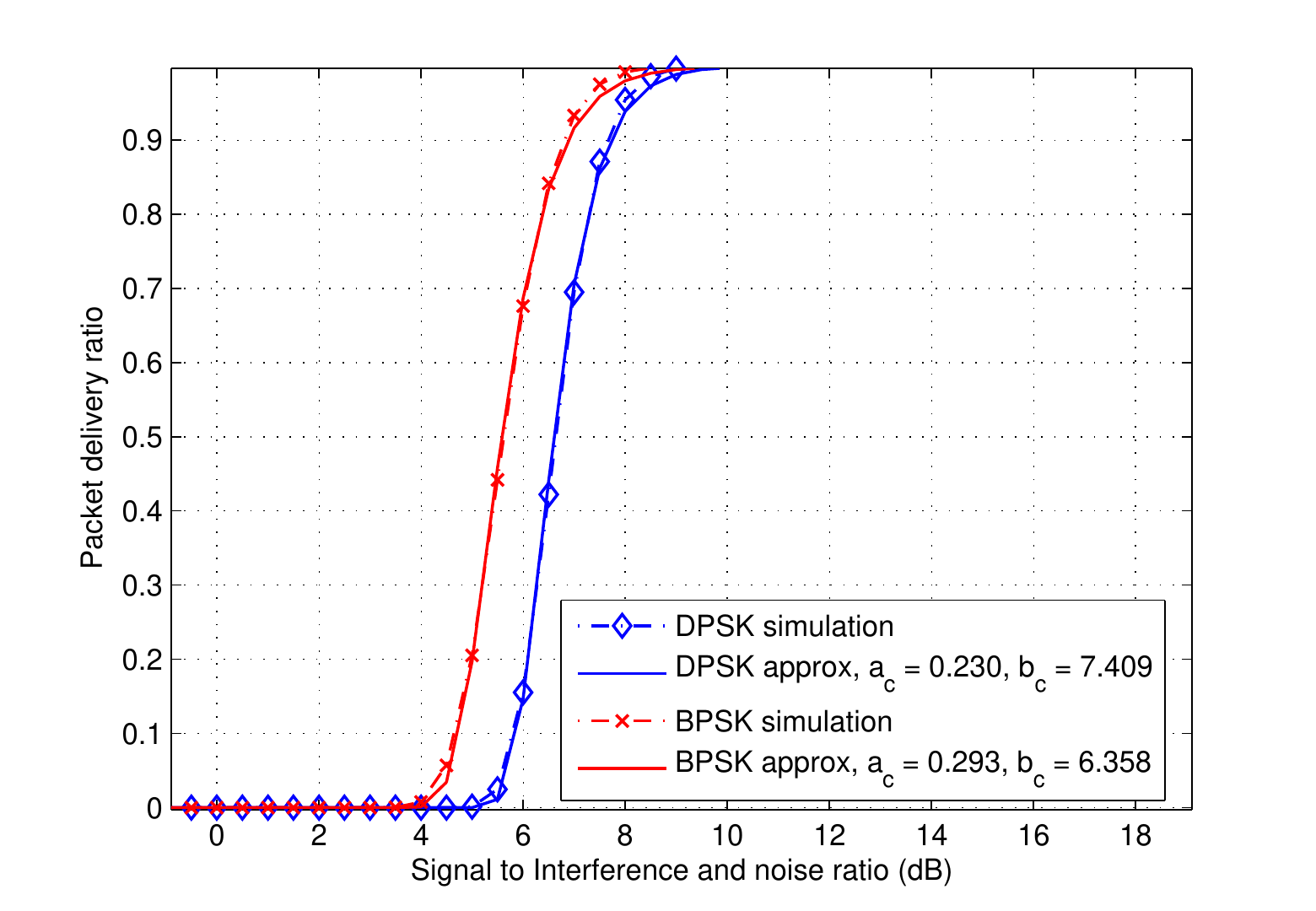}}
\caption{Simulated SINR vs. Approximated SINR}
\label{fig: SINR fit}
\end{figure}

\begin{table}
\tbl{Estimated parameters in the compressed exponential function (\ref{equ:PDR}) and its simplified form (\ref{equ:simplePDR})\label{parameters}}{
  \begin{tabular}{|c|c|c|c|c|}
    \hline
    Modulation & $a_c$ & $b_c$ & $a$ & $b$ \\ \hline
    DPSK & 0.230 & 7.409 & $-$337.2164 & $-$7.4540 \\ \hline
    BPSK & 0.293 & 6.358 & $-$30.0512 & $-$6.3470 \\
  \hline
\end{tabular}}
\end{table}

\section{Non-Cooperative Power control Game}
\label{Sec:game definition}
\subsection{Game Definition}
In this multiple BANs coexistence game, each BAN network makes an independent decision on the transmit power of the next packet based on its current SINR. Here each BAN is treated as a player in a non-cooperative repeated game $G = \{N, \mathbf{P}, U\}$, where:
\begin{enumerate}
    \item $N = \{1, 2, ... , m\}$ is a finite set of players whom indexed by $i$. $N$ represents the same set as $S_{active}$;
    \item $\mathbf{P}$ represents the global strategy space, which is the Cartesian product of all players' strategy spaces, i.e. $\mathbf{P} = \mathbf{P}_1 \times \mathbf{P}_2 \times ... \times \mathbf{P}_m$. The pure strategy set of player $i$, $\mathbf{P}_i$, is a finite set of discrete transmit powers in the range of $[P_i^{min}, P_i^{max}]$.  The action of player $i$ at any time(stage) $\tau$ is denoted as $p_i(\tau) \in \mathbf{P}_i$, and $\ve{p_{-i}}$ stands for the choice of transmission power of other players except player $i$;
    \item The utility function $U_i$ is defined in terms of the current transmission power $p_i(\tau)$ and packet delivery ratio PDR. Its objective is to maximize the PDR while minimizing the transmit power. It is defined as follow:
        \begin{align}
            U(p_i, pdr_i) = -p_i^{w_i}-\frac{d_i}{pdr_i^{v_i}},
            \label{equ:utility function_1}
        \end{align}
        where $pdr_i$ is a function of SINR and thus a function of the transmit powers of all players according to (\ref{equ:PDR}) and (\ref{equ:SINR}). Hence, $U(p_i,pdr_i)$ can be rearranged and expressed as $U(p_i,p_{-i})$. The exponents $v_i > 0$ and $w_i > 0$ depend on the particular network configuration, and can be varied accordingly. The weighting factor $d_i > 0$ can be adjusted depending on the current network status. At the end of every time slot, players (BANs) update their transmit power levels to maximize the outcome from applying the utility function based on the latest transmit power and the current SINR:
        \begin{align}
            &p_i(\tau+1) = \arg \max (U(\mathbf{P_i}, p_{-i})), \label{equ:maximize}\\
            \textrm{where}\ &\mathbf{P}_i = \{p_i | p_i \in [P_i^{min}, P_i^{max}]\}, \forall i \in N \nonumber
        \end{align}

\end{enumerate}

\subsection{Nash Equilibrium}
An important condition for the non-cooperative game to converge is that a unique Nash Equilibrium (NE) exists. The existence and uniqueness of the Nash Equilibrium of the defined game are proved as follow.
    \begin{definition}
        \label{def: NE}
        The action profile $\mathbf{p}^* = (p_1^*, p_2^*, ..., p_m^*) \in \mathbf{P}$ is a Nash Equilibrium if, for all players, $p_i^*$ is a best response to $\ve{p_{-i}}^*$. In the other words, there exists $U_i(p_i^*,\ve{p_{-i}}^*) \ge U_i(p_i,\ve{p_{-i}}^*)$ for any choice of $p_i \in \mathbf{P_i}$.
    \end{definition}
\smallskip
    \begin{theorem}
        At least one Nash Equilibrium exists for the non-cooperative finite repeated game $G = \{N, \mathbf{P}, U\}$ proposed here. 
    \end{theorem}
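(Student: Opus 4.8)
The plan is to treat each player's action set as the compact convex interval $[P_i^{min}, P_i^{max}] \subset \mathbb{R}$ on which the best-response maximization (\ref{equ:maximize}) is actually carried out, and to invoke the Debreu--Glicksberg--Fan existence theorem (the continuous-strategy extension of Nash's theorem, equivalently a Kakutani fixed-point argument applied to the joint best-response correspondence). That theorem delivers a pure-strategy NE once three conditions are verified: (i) every strategy space is a nonempty, compact, convex subset of a Euclidean space; (ii) each payoff $U_i$ is jointly continuous in the full profile $\mathbf{p}=(p_1,\dots,p_m)$; and (iii) for every fixed $\ve{p_{-i}}$ the map $p_i \mapsto U_i(p_i,\ve{p_{-i}})$ is quasi-concave.

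Condition (i) is immediate, since $[P_i^{min},P_i^{max}]$ is a closed bounded interval. For condition (ii), note from (\ref{equ:SINR}) that $\gamma_i$ is a ratio whose denominator $\sum_{j\ne i} p_j|h_j^i|^2+\sigma^2$ is bounded away from zero because $\sigma^2>0$; hence $\gamma_i$ is continuous in $\mathbf{p}$, and since $pdr_i=\exp(a\gamma_i^{\,b})>0$, the utility (\ref{equ:utility function_1}) is a continuous composition on the compact product space $\mathbf{P}$.

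The substantive step is condition (iii). First I would fix $\ve{p_{-i}}$ and observe that, as a function of $p_i$ alone, the SINR collapses to a linear map $\gamma_i = K_i\,p_i$ with constant slope $K_i = |h_i^i|^2/(\sum_{j\ne i}p_j|h_j^i|^2+\sigma^2)>0$. Substituting into (\ref{equ:utility function_1}) and (\ref{equ:simplePDR}) gives $U_i(p_i,\ve{p_{-i}}) = -p_i^{w_i} - d_i\exp(-v_i a (K_i p_i)^{b})$. I would then establish concavity in $p_i$ by differentiating twice: the power-penalty term $-p_i^{w_i}$ is concave on $(0,\infty)$ whenever $w_i\ge 1$, while for the reliability term the decisive fact is that both fitted constants satisfy $a<0$ and $b<0$ (Table \ref{parameters}); writing it as $-d_i\exp(\beta p_i^{b})$ with $\beta=-v_i a K_i^{b}>0$ and factoring its second derivative shows each factor has a definite sign, so the term is negative throughout $(0,\infty)$. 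As a sum of concave terms, $U_i(\cdot,\ve{p_{-i}})$ is concave, hence quasi-concave, closing condition (iii).

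I expect condition (iii) to be the main obstacle, since this is where the specific functional form of the utility and the signs of the fitting parameters must be exploited; in particular the argument hinges on $a<0$, $b<0$ and on the power-penalty exponent satisfying $w_i\ge 1$, so I would make that range explicit rather than relying on $w_i>0$ alone. With (i)--(iii) in hand, Debreu--Glicksberg--Fan (or, equivalently, Brouwer's theorem applied to the single-valued continuous best-response map that strict concavity produces) yields a fixed point of the joint best response, i.e.\ an action profile $\mathbf{p}^*$ satisfying Definition \ref{def: NE}, which proves existence of at least one Nash Equilibrium. Replaying this stage-game equilibrium in every time slot then furnishes the subgame-perfect equilibrium referred to in the abstract.
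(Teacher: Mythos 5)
Your proposal follows the same basic route as the paper: both treat each $\mathbf{P}_i$ as the continuum $[P_i^{min},P_i^{max}]$ and appeal to the standard fixed-point existence theorem for concave games after checking that strategy sets are nonempty, compact and convex and that payoffs are continuous. The difference is that your version is actually the complete argument, whereas the paper's proof of Theorem~1 stops after those two conditions: it never names the theorem being invoked, and it omits the quasi-concavity of $p_i \mapsto U_i(p_i,\ve{p_{-i}})$, which is a genuine hypothesis of Debreu--Glicksberg--Fan (without it, continuity and compactness alone only give a mixed-strategy equilibrium via Glicksberg). The concavity computation you supply --- second derivative of $-p_i^{w_i} - d_i\exp(\beta p_i^{b})$ with $\beta = -v_i a K_i^{b} > 0$, using $a<0$, $b<0$ and $w_i \ge 1$ --- is exactly the calculation the paper defers to the proof of Theorem~\ref{NEtheor} (its equation for $\delta^2 U_i/\delta p_i^2$), where it is used only to argue uniqueness. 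You are also right to flag $w_i \ge 1$: the game definition only assumes $w_i>0$, and the paper itself quietly strengthens this to $w_i \ge 1$ inside the uniqueness proof. Your continuity argument (denominator of the SINR bounded below by $\sigma^2>0$, then composition of continuous maps) is also cleaner than the paper's, which infers continuity from the first derivative ``being defined.'' In short: same approach, but you close a gap the paper leaves between its Theorems~1 and~2.
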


\begin{proof}
\begin{enumerate}
\item $[P_i^{min}, P_i^{max}]$ is a nonempty, convex and compact subspace of a Euclidean space $\mathbb{R}^m$.
\item The utility function (\ref{equ:utility function_1}) is continuous in the domain $[P_i^{min}, P_i^{max}]$. This can be shown by taking the first derivative of the utility function and substituting (\ref{equ:simplePDR}) and (\ref{equ:SINR}):

    \begin{align}
        \frac{\delta U_i}{\delta p_i}
        &= -w_ip_i^{w_i-1} + \frac{d_iv_i}{pdr_i^{v_i+1}}\frac{\delta pdr_i}{\delta p_i}, \label{equ:NE_1}\\
        &= -w_ip_i^{w_i-1} + \frac{d_iv_i}{pdr_i^{v_i+1}}pdr_iab\gamma_i^{b-1}\frac{\delta \gamma_i}{\delta p_i}, \label{equ: NE_2}\\
        &= -w_ip_i^{w_i-1} + ab\gamma_i^{b-1}\frac{d_iv_i}{pdr_i^{v_i}}\frac{|h_i^i(k_i)|^2}{I_{-i}}, \label{equ: NE_3}
    \end{align}
    where $I_{-i}$ is the interference and noise power experienced at the hub of player i. Based on (\ref{equ:SINR}), we can derive the relation $\frac{\gamma_i}{p_i} = \frac{|h_i^i(k_i)|^2}{I_{-i}}$. Therefore,
    \begin{align}
        \frac{\delta U_i}{\delta p_i} = -w_ip_i^{w_i-1} + ab\gamma_i^{b-1}\frac{d_iv_i}{pdr_i^{v_i}}\frac{\gamma_i}{p_i}
    \end{align}
    Since $p_i \in [P_i^{min}, P_i^{max}]$ is real and $pdr_i$ is non-zero according to the approximation shown in (\ref{equ:simplePDR}), the first derivative function is defined. Therefore, Theorem 1 is proved.
\end{enumerate}
\end{proof}

\begin{theorem}\label{NEtheor}
       The Nash Equilibrium at each stage in the non-cooperative power control game $G$  is unique, when $d_i > 0$. With the unique Nash equilibrium at each stage, which is independent of history, there is a unique sub-game perfect equilibrium.
\end{theorem}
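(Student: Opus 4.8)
The plan is to split the statement into three stages and prove each in turn: (i) at a fixed stage, every player has a unique best response; (ii) these unique best responses assemble into a unique stage Nash equilibrium; and (iii) this per-stage uniqueness propagates, by backward induction, to a unique subgame-perfect equilibrium of the finite repeated game. For (i), I would fix an arbitrary opponent profile $\ve{p_{-i}}$, so that the interference-plus-noise term $I_{-i}$ is constant and the relation $\gamma_i=(|h_i^i|^2/I_{-i})\,p_i$ from (\ref{equ:SINR}) makes $\gamma_i$ linear in $p_i$. Starting from the first derivative already obtained in (\ref{equ: NE_3}), I would differentiate once more to get $\partial^2 U_i/\partial p_i^2$ and check its sign over $[P_i^{min},P_i^{max}]$. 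Using $a<0$, $b<0$ (Table~\ref{parameters}), $pdr_i\in(0,1)$ and $d_i,v_i,w_i>0$, the penalty term $-d_i/pdr_i^{v_i}$ contributes a strictly negative second derivative, and one verifies $\partial^2 U_i/\partial p_i^2<0$ throughout the interval precisely when $d_i>0$. Strict concavity on the compact interval then yields a single maximizer, i.e. a well-defined best-response function $p_i=B_i(\ve{p_{-i}})$; this is exactly where the hypothesis $d_i>0$ is used.

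For (ii), I stress that unique best responses per player do not by themselves give a unique equilibrium: a stage NE is a fixed point of the joint map $B(\vb{p})=(B_1(\ve{p_{-1}}),\dots,B_m(\ve{p_{-m}}))$, and I must rule out multiple fixed points induced by the interference coupling. The route I would take is to verify Rosen's diagonal strict concavity condition for the concave game $G$: form the pseudo-gradient $\vb{g}(\vb{p})=(\partial U_1/\partial p_1,\dots,\partial U_m/\partial p_m)$ and show that its Jacobian has a negative-definite symmetric part on $\mathbf{P}$, which forces the equilibrium to be unique. A more idiomatic alternative for this power-control structure is to show that $B$ is a standard interference function in the sense of Yates (positivity, monotonicity, and scalability in $\ve{p_{-i}}$), or directly a contraction in a suitably weighted norm, since each of these guarantees a unique fixed point. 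This is the step I expect to be the \emph{main obstacle}: the per-player second-order analysis of step~(i) is routine, but controlling the off-diagonal coupling---how $B_i$ reacts to each $p_j$ through $I_{-i}=\sum_{j\neq i}p_j|h_j^i|^2+\sigma^2$---is where the real work lies, and it is the only place the interference channel gains $h_j^i$ genuinely enter.

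For (iii), I would invoke backward induction over the finite horizon. At the final stage there is no continuation play, so each player simply plays the stage game, whose equilibrium is unique by (i)--(ii); hence the last-stage action must be the stage NE regardless of the history of play. Folding this backward, at any earlier stage the continuation payoff is a constant independent of the current action, so maximizing the total payoff again reduces to maximizing the stage utility, whose unique maximizing profile is once more the stage NE. Iterating from the last stage down to the first shows that playing the unique stage Nash equilibrium at every stage, independently of history, is the unique subgame-perfect equilibrium, which is the assertion of the theorem.
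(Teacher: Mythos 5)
Your step (i) is, in essence, the paper's entire proof: the authors compute the second derivative (\ref{equ: second de}), observe that $w_i \ge 1$, $v_i>0$, $d_i>0$ and $a,b<0$ force $\delta^2 U_i/\delta p_i^2<0$, conclude that $U_i(\cdot,\ve{p_{-i}})$ is strictly concave with a single maximizer on $[P_i^{min},P_i^{max}]$, and then immediately declare the stage Nash equilibrium unique; the subgame-perfection claim is disposed of in one sentence by noting that the stage equilibrium is independent of history, which is your step (iii). Where you depart from the paper is step (ii), and you are right to flag it: strict concavity of each $U_i$ in its own variable yields only a single-valued best-response map $B$, not a unique fixed point of $B$, and the coupling through $I_{-i}=\sum_{j\ne i}p_j|h_j^i|^2+\sigma^2$ is precisely what could generate multiple equilibria. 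The paper silently elides this step.

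However, your proposal does not close that gap either: you list three candidate tools (Rosen's diagonal strict concavity, Yates' standard interference functions, a weighted-norm contraction) but verify none of them, and you yourself label this the main obstacle. For this particular utility, checking any of these conditions is not routine --- for instance, negative definiteness of the symmetrized pseudo-gradient Jacobian, or the scalability axiom of a standard interference function, will in general require restrictions on the cross-gains $|h_j^i|^2$ relative to the on-body gains $|h_i^i|^2$ that neither you nor the paper states. So, as written, both your proposal and the published argument establish existence of a stage NE and uniqueness of each player's best response, but uniqueness of the stage equilibrium itself (and hence of the subgame-perfect equilibrium) is asserted rather than proved. If you carry out step (ii) in full, your argument would repair the published proof rather than merely reproduce it; until then it shares the same hole, albeit with the considerable merit of naming it.
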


\begin{proof}
To show the Nash Equilibrium point $p_i$ is unique in the range of $[P_i^{min}, P_i^{max}]$, it is sufficient to check the concavity of the utility function $U(p_i, pdr_i)$ by taking the second derivative.
        \begin{align}
            \frac{\delta^2 U_i}{\delta p_i^2}=
            -w_i(w_i-1)p_i^{w_i-2} + c_i\frac{\gamma_i^{b-2}}{pdr_i^{v}}\{(b-1)-abv_i\gamma_i^b\}
            \label{equ: second de};
        \end{align}
where $c_i = abd_iv_i\frac{|h_i^i|^4}{I_{-i}^2}$. Due to the fact that $p_i$ is always positive, the first part of (\ref{equ: second de}) has a negative value as long as the constraint of exponent $w_i > = 1$ is satisfied. In addition, since $v_i >0$, and $a$ and $b$ are both negative and SINR,$c_i$ and PDR are always positive, the second part of (\ref{equ: second de}) is always negative. The addition of these two parts means the second derivative $\frac{\delta^2 U_i}{\delta p_i^2} < 0$ in the range of $[P_i^{min}, P_i^{max}]$. Therefore, the function $U(p_i,pdr_i)$ is concave and has a local maximum at $p_i^*$ which occurs at the point $\frac{\delta U}{\delta p_i} = 0$. In other words, the Nash Equilibrium at each stage of this game is unique. Furthermore at any given stage, it can be seen that this equilibrium is independent of the history, hence there is a sub-game perfect equilibrium.
\end{proof}

\subsection{\ds{Social Optimality and} Pareto Optimality}
As the existence and uniqueness of the Nash Equilibrium (NE) of the defined non-cooperative game \ds{has been proved}, the efficiency of \ds{operating at} the NE \ds{now needs to be determined}. It should be noted that a NE, as per definition \ref{def: NE}, is the best response in a single player's point of view, given the decisions of the other players. However, in the proposed non-cooperative game, \ds{there is imperfect} information \ds{as the} other BANs' transmit power for the current time slot is unknown. Therefore, it predicts other BANs' action\ds{s} based on the \ds{latest} aggregate interference power received. Additionally, as players in the game act in their own self interest, there is no guarantee that the decision is optimal from a social point of view, or \ds{even from an individual BAN's perspective}. Therefore, the efficiency of the NE solution is characterized as follows:
\begin{definition}
\label{def: Social optimality}
A joint action profile $\mathbf{p}^* = \left(p_1^*,p_2^*, ..., p_m^*\right) \in \mathbf{P}$ is a socially optimal (efficient) outcome if \ds{there is} no other joint action profile $\mathbf{p}$ such that $\sum\limits_{i=1}^m U_i(p_i,\mathbf{p_{-i}}) \geq \sum\limits_{i=1}^m U_i(p_i^*,\mathbf{p_{-i}^*})$.
\end{definition}
In Definition \ref{def: Social optimality}, $\sum\limits_{i=1}^m U_i(p_i,\mathbf{p_{-i}})$ is the social welfare of the game. As $\mathbf{p}^*$ leads to the maximal social welfare, it is social\ds{ly} optimum and therefore \ds{is also} Pareto efficient \cite{SocialOptimal}. In this paper, we would like to show that the action profile $\mathbf{p}^* = \left(p_1^*,p_2^*, ..., p_m^*\right)$ at the NE point is indeed \ds{socially optimal}. This can be observed by comparing the social welfare, i.e. aggregate utility outcome of (\ref{equ:utility function_1}),  obtained with all different action profiles $\mathbf{p} \in \{\mathbf{P} = \mathbf{P}_1 \times \mathbf{P}_2 \times ... \times \mathbf{P}_m\}$.

\begin{figure}[t!]
\centerline{\includegraphics[width=0.8\columnwidth]{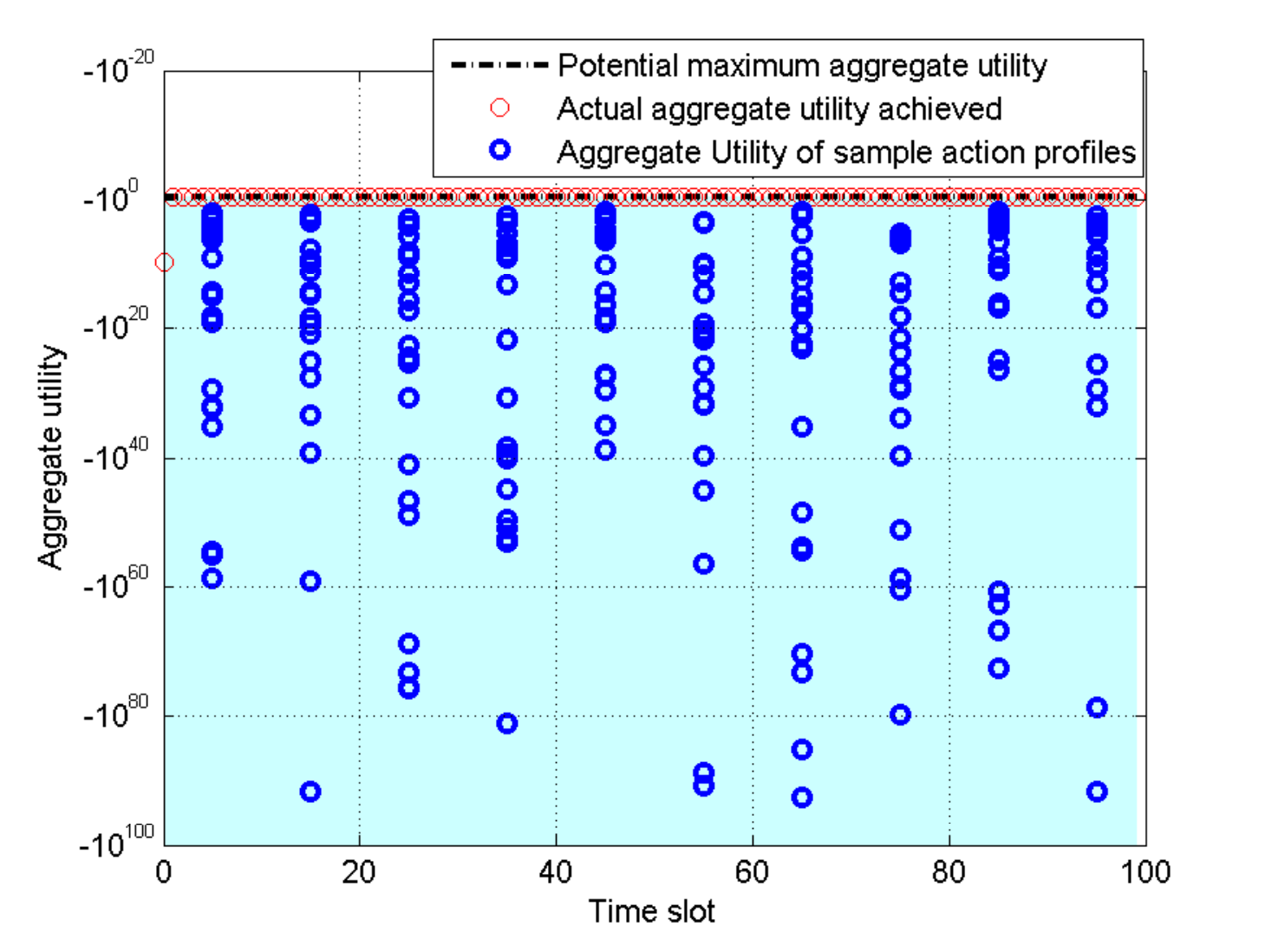}}
\caption{Social welfare outcome of the game}
\label{fig: optimal}
\end{figure}
\ds{
\begin{proposition}
\ds{When all BANs} are operating at the unique Nash equilibrium shown in Theorem \ref{NEtheor}, then according to Definition \ref{def: Social optimality} the outcome of the proposed power control game is socially optimal.
\end{proposition}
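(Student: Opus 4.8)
The plan is to recast social optimality as the (interior) maximization of the social welfare $W(\mathbf{p}) = \sum_{i=1}^m U_i(p_i,\mathbf{p_{-i}})$ over the box $\prod_{i=1}^m [P_i^{min},P_i^{max}]$, and to show that the joint maximizer of $W$ coincides with the action profile $\mathbf{p}^*$ characterized in Theorem \ref{NEtheor}. First I would establish that $W$ admits a unique maximizer. Since each $U_i$ was already shown to be strictly concave in its own argument $p_i$ via the second-derivative computation (\ref{equ: second de}) in Theorem \ref{NEtheor}, I would strengthen this to joint concavity of $W$ in $\mathbf{p}$ by examining the full Hessian $[\partial^2 W/\partial p_i\,\partial p_j]$, using the SINR dependence (\ref{equ:SINR}) and the PDR form (\ref{equ:simplePDR}). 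Given joint concavity over a nonempty, compact, convex set, $W$ has a unique maximizer, and if that maximizer is interior it is characterized by the stationarity conditions $\partial W/\partial p_i = 0$ for all $i$.

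The second step is to compare these stationarity conditions against the Nash conditions $\partial U_i/\partial p_i = 0$ of Theorem \ref{NEtheor}. Writing $\partial W/\partial p_i = \partial U_i/\partial p_i + \sum_{j\neq i}\partial U_j/\partial p_i$, the NE coincides with the social optimum precisely when the aggregate cross-coupling term $\sum_{j\neq i}\partial U_j/\partial p_i$ vanishes at $\mathbf{p}^*$. I would therefore evaluate these cross-partials explicitly, differentiating $U_j = -p_j^{w_j} - d_j\, pdr_j^{-v_j}$ through $pdr_j = \exp(a\gamma_j^b)$ and the dependence of $\gamma_j$ on the interfering power $p_i$ through the interference term $I_{-j}$, following the same chain rule as in (\ref{equ:NE_1})--(\ref{equ: NE_3}) but now taking the derivative with respect to an interferer's power rather than the player's own power.

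The hard part will be this reconciliation of the first-order conditions, i.e. controlling the externality. Raising $p_i$ increases the interference $I_{-j}$ seen by every other hub and hence strictly lowers each $pdr_j$, so $\partial U_j/\partial p_i$ carries a definite (negative) sign and the cross terms do not cancel in any obvious way; this is where the argument must do genuine work rather than routine calculation. I would expect to resolve it either by showing that at $\mathbf{p}^*$ the coupling is negligible in the operating regime of interest (inter-body gains $|h_i^j|^2$ small relative to the on-body gain $|h_i^i|^2$, together with the chosen exponents $w_i,v_i$ and weights $d_i$), or by invoking the boundary structure of the feasible box so that the constrained maximizer of $W$ still lands on $\mathbf{p}^*$. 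Finally, I would support the analytic argument with the direct comparison of $W$ across all action profiles $\mathbf{p}\in\mathbf{P}$ depicted in Fig.\ref{fig: optimal}, confirming that the welfare-maximizing profile agrees with the Nash point of Theorem \ref{NEtheor}, as required by Definition \ref{def: Social optimality}.
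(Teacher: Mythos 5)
Your proposed analytic route is considerably more ambitious than what the paper actually does, and it contains a gap that you identify but do not close. The decomposition $\partial W/\partial p_i = \partial U_i/\partial p_i + \sum_{j\neq i}\partial U_j/\partial p_i$ is the right way to frame the question, but as you yourself observe, each cross-partial $\partial U_j/\partial p_i$ is strictly negative: raising $p_i$ increases $I_{-j}$, lowers $\gamma_j$ and hence $pdr_j$, and so strictly decreases $U_j$. This is the standard interference externality, and it is exactly why Nash equilibria of non-cooperative power control games are \emph{generically not} socially optimal. Your two escape hatches --- negligible inter-body coupling relative to on-body gain, or the maximizer landing on the boundary of the box --- are stated as hopes rather than established; the first would need quantitative bounds involving $|h_j^i|^2/|h_i^i|^2$, $a$, $b$, $v_i$, $d_i$ that you do not supply, and the second contradicts the interior stationarity characterization ($\delta U/\delta p_i = 0$) used in Theorem \ref{NEtheor}. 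Likewise, joint concavity of $W$ in $\mathbf{p}$ does not follow from concavity of each $U_i$ in its own $p_i$ alone; the off-diagonal Hessian entries would have to be controlled, which you flag but do not carry out. So the analytic core of your proposal, as written, would fail or at best remain incomplete.

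The paper itself does not attempt any of this. Its ``proof'' is purely empirical: it exhaustively evaluates the social welfare over all action profiles $\mathbf{p}\in\mathbf{P}$ at each time slot in simulation and observes that the maximum coincides with the welfare achieved at the NE (Fig.~\ref{fig: optimal}), with randomly sampled profiles lying below. Your final sentence --- falling back on exactly this comparison --- is therefore the only part of your proposal that matches the paper's argument, and it is the only part that actually supports the claimed conclusion (and then only as a numerical observation for the simulated channels and parameters, not as a theorem). In short: your analytic programme is a genuinely different and more principled route, but it is unfinished at precisely the step where the real difficulty lies, whereas the paper sidesteps that difficulty entirely by verification.
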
}
\begin{proof}Following (\ref{equ:utility function_1}), \ds{it may appear that the} maximum social welfare that the game can \ds{achieve} is when all BANs transmit at the lowest power, i.e. $-$30dBm, and meanwhile achieve a PDR of 1. However, due to the on-body channel attenuation and inter-BAN interference, \ds{it is generally impossible to achieve this}. With exhausted comparing among all possible action profiles $\mathbf{p}$ at each time slot, the maximum possible social welfare is plotted as the black dashed line in Fig.\ref{fig: optimal}. It matches the social welfare reached with actual action profile \ds{$\mathbf{p}^*$} determined at the \ds{unique} NE, which is indicated as red dot\ds{s} in Fig.\ref{fig: optimal}. Additionally, Fig.\ref{fig: optimal} also shows the social welfare of \ds{a large agglomerate of randomly sampled action profiles} in the blue shaded area, which are all below the actual social welfare of the game. It is noted that the aggregate utility achieved at time slot $0$ is not the optimal social welfare, since the initial transmit power of all BANs are randomly chosen in the range of $[-30, 0]$ dBm. According to all \ds{possible} observation\ds{s}, the Nash Equilibrium is therefore social\ds{ly} optimal\ds{, and hence also Pareto optimal}.
\end{proof}

\subsection{Algorithm Description}
In this section, an iterative and distributed power control game that determines each BAN's transmission power at time $\tau$ is described as in Alg.\ref{Alg: power control}.
\begin{algorithm}
\caption{The proposed distributed non-cooperative power control game}

\begin{enumerate}
    \item In BAN $i$, a sensor transmits a packet with $p_i(\tau)$. When $\tau = 0$, $p_i(0) \in [P_i^{min}, P_i^{max}]$ is randomly chosen. If $\tau > 0$, $p_i(\tau-1)$ is determined in the previous iteration at $\tau-1$;\\
    \item Hub in BAN $i$\ ($i \in N$) calculates the SINR $\gamma_i(\tau)$ of the received packet as (\ref{equ:SINR});\\
    \item Estimate the instantaneous PDR $pdr_i(\tau)$ with $\gamma_i(\tau)$ based on the PDR vs. SINR approximation (\ref{equ:simplePDR});\\
    \item As (\ref{equ:maximize}),determine the transmission power at $\tau+1$, $p_i(\tau+1) \in [P_i^{min},P_i^{max}]$, which gives the maximum value of $U(p_i, pdr_i(\tau))$;
\end{enumerate}
\label{Alg: power control}
\end{algorithm}

\section{Channel model}
\label{sec: channel model}
To evaluate the performance of the proposed game-based power control algorithm, extensive on- and inter-body channels are modeled. We simulate the scenarios in which a random number of BANs are coexisting and moving in arbitrary directions. Since the same network topology is used in each BAN, it is a reasonable assumption that on-body channels are \notsure{independent and identically distributed for all players}. The gamma distribution can characterize the general everyday on-body channel of a BAN \cite{Smith:AT:2011}, so gamma fading with a mean 60 dB attenuation, with shape parameter of 1.31, and scale parameter of 0.562, which considers the effect of body shadowing and BAN channel dynamics, is employed.

In terms of the inter-body interfering channel, we start with representing the movement of a player by a series of $(x(\tau), y(\tau))$ coordinates updated every 1 ms. The initial positions of players are randomly chosen within a $6 \times 6 \textrm{m}^2$ square area which corresponds to the requirements in the standard \cite{tg6_d}. During their movement, a random small turning angle is introduced to model a realistic walking pattern of an individual. In addition, an average walking speed of 3m/s with 0.2m/s standard deviation is applied. This walking model enables us to calculate the distance between two players $i$ and $j$ at any instant throughout the simulation. The channel attenuation is then calculated based on path loss model, body shadowing and also small scale fading,
\begin{align}
    h_i^j &= A_t(dist_0/dist_{ij})^{(2.7/2)}A_{BS}A_{SC},
    \label{equ: channel attenuation}
\end{align}
assuming a path loss exponent of 2.7 between BANs\footnote{\ds{Although other common path loss exponents $\geq$ 2 for the environments in which BANs are co-located are equally applicable}.}. \notsure{$dist_{ij}$ represents the distance between players i and j, and }the reference distance $dist_0 = 5m$ corresponds to a channel attenuation $A_t$ of 54~dB. We consider the average case where body shadowing contributes approximately $A_{BS} = 45$ dB attenuation and adopt a Jakes' model with Doppler spread of 1.1 Hz as the Rayleigh distribution for the small scale fading $A_{SC}$ between BANs.

\section{Simulation analysis}
One individual repeated game consists of 100 repeated game-playing stages. 20 sets of channels are generated based on the description in Section \ref{sec: channel model}. Because of the random moving velocity, the walking pattern varies between different channel model sets. With each set, the same 100-stage game is played on 50 occasions, using different segments of the data. Therefore, a total of 1000 games, each with 100 stages, are conducted. Here, a maximum of 8 BAN networks locating in the vicinity is simulated, i.e. $M$ = 8 in (\ref{equ:coexistence}). At any time during the simulation, a various number of BANs are active with the others idle. The actual number of networks transmitting concurrently follows the probability distribution ${P_m}$ described in (\ref{equ:coexistence}). The case of $m = 0$ is neglected as the case of no network transmitting is irrelevant. In terms of inter-BAN TDMA scheme, we assume 4 orthogonal channels ($N_c = 4$) are used. In addition, to investigate the effect of a given number of coexisting BANs on the performance of the proposed algorithm, constant numbers of BANs coexisting are also simulated, with the number coexisting from 2 to 8 each run on 1000 occasions.  The exponents in the utility function $v$ and $w$ are set to be 4 and 1, which give the best outcomes for the game.

Based on the same configuration and channel models, we compare the proposed game with some other schemes commonly applied in BAN. The comparison is based on two criteria -- (i) percentage of BANs reaching the target PDR; and (ii) transmission power at each stage. The result is averaged across all 1000 games. Complying to the IEEE 802.15.6 standard \cite{tg6_d}, the target PDR is set to be 0.9. The rest of the schemes are \textbf{Sample-and-Hold}\cite{Smith:ICC:2011,Dong:ICC:2014}, \textbf{SINR-Balancing}\cite{SINR_balancing}\cite{Lee&Lin1996_SIRBalancing} and constant transmission power at \textbf{0/$-5$/$-10$ dBm}. Here in \textbf{Sample-and-Hold} current SINR is used for each BAN to set its' next transmission power, unlike in \cite{Dong:ICC:2014}, where it is done with respect to channel gain. Here in \textbf{Sample-and-Hold} the transmission power is adjusted based on the latest packet's SINR to attempt to achieve the target SINR for the next packet.

\begin{figure}[t!]
\centering
\subfigure[BPSK PDR]
{\includegraphics[width=0.8\columnwidth]{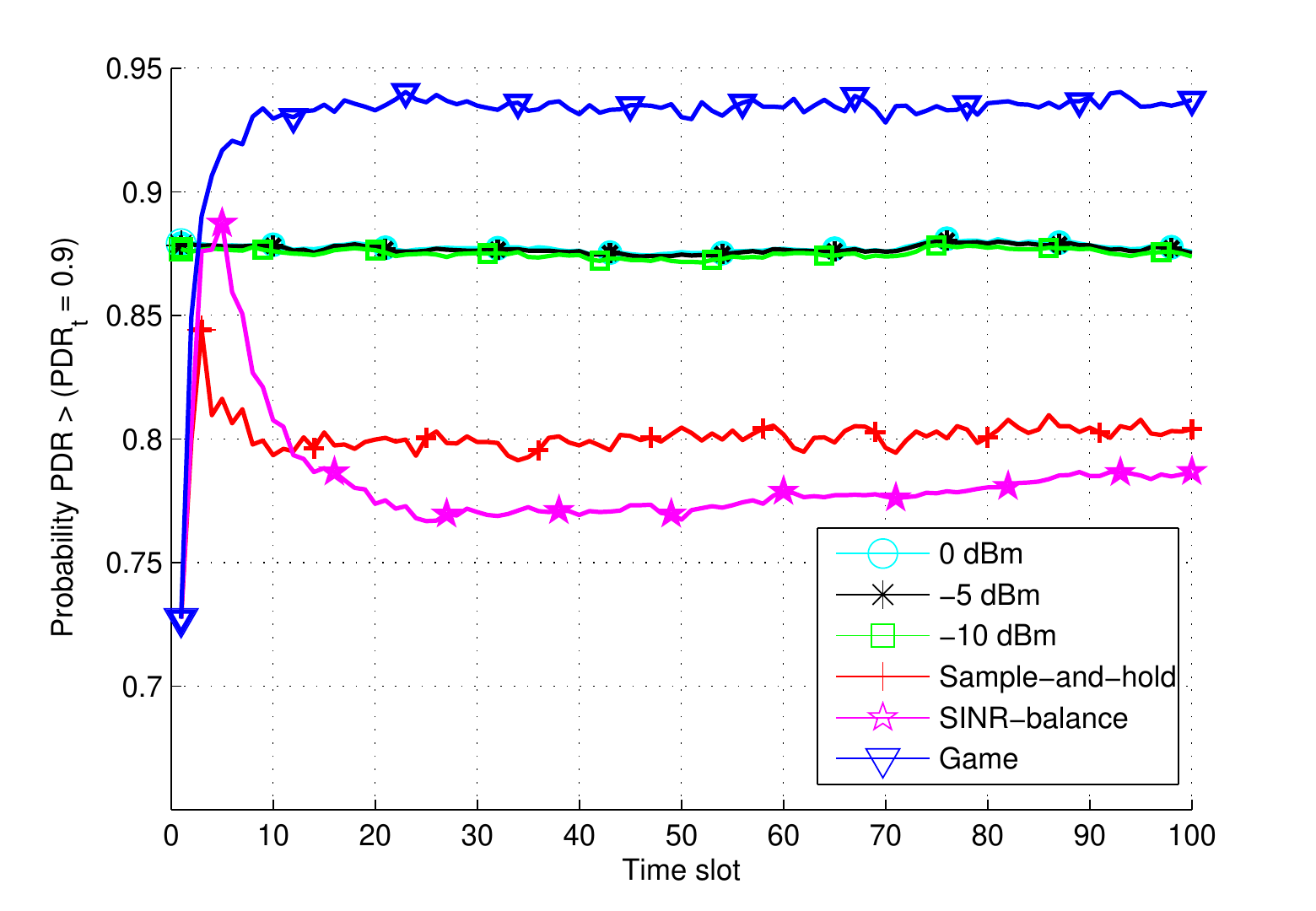}
\label{fig:subfig1a}}
\subfigure[BPSK power]
{\includegraphics[width=0.8\columnwidth]{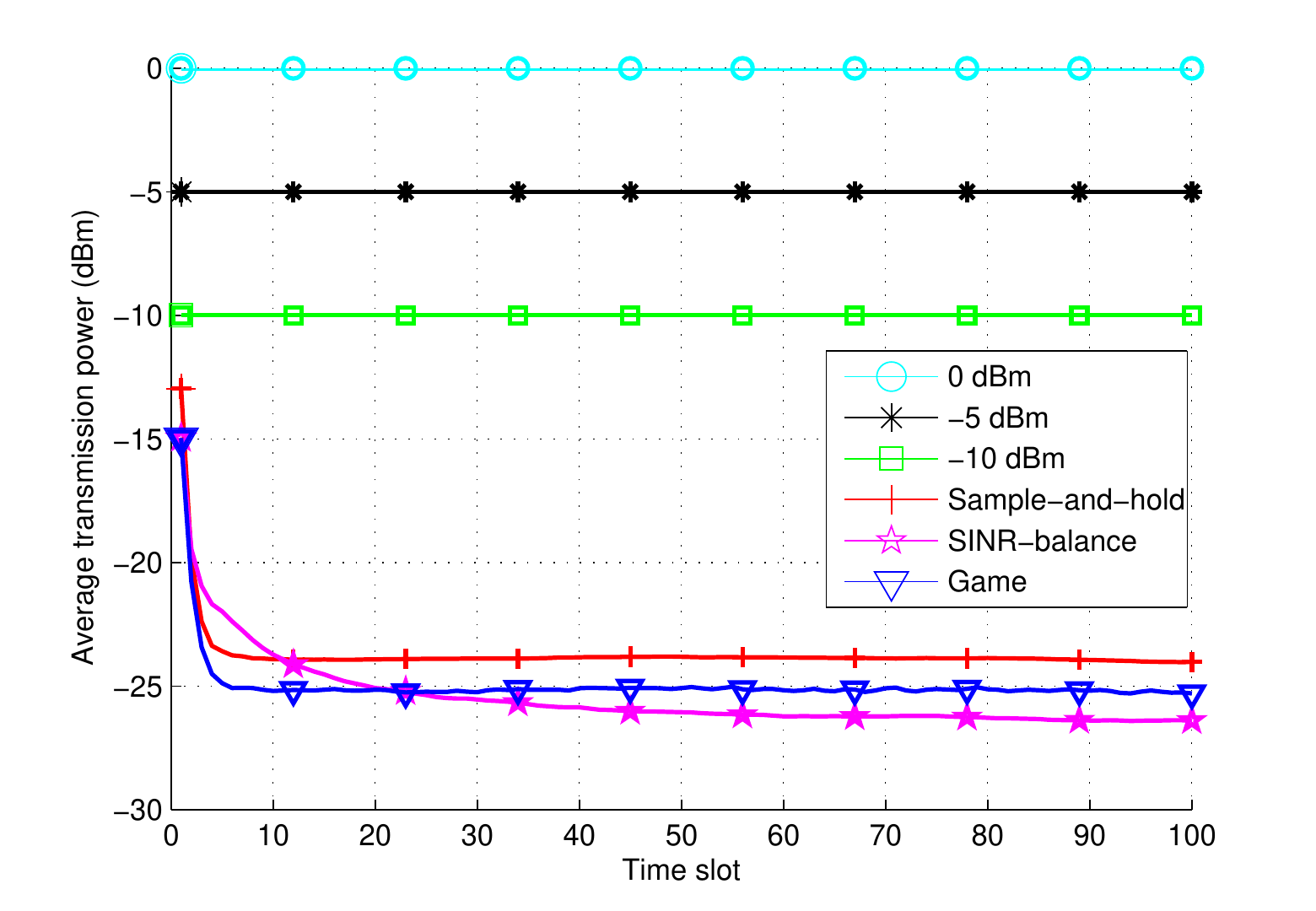}
\label{fig:subfig1b}}
\caption{BPSK Simulation results}
\label{fig:BPSK}
\end{figure}

\begin{figure}[t!]
\centering
\subfigure[DPSK PDR]
{\includegraphics[width=0.8\columnwidth]{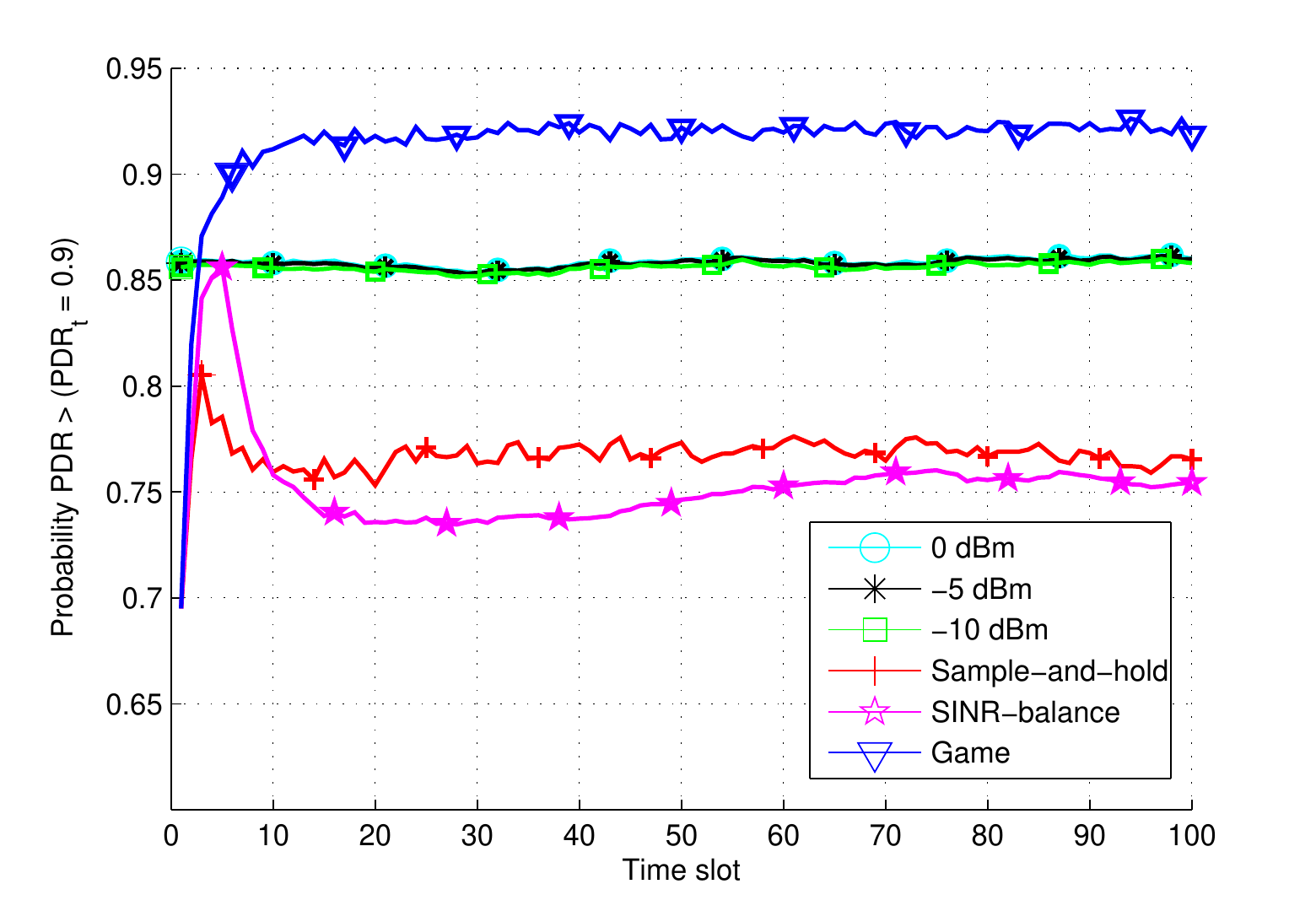}
\label{fig:subfig2a}}
\subfigure[DPSK power]
{\includegraphics[width=0.8\columnwidth]{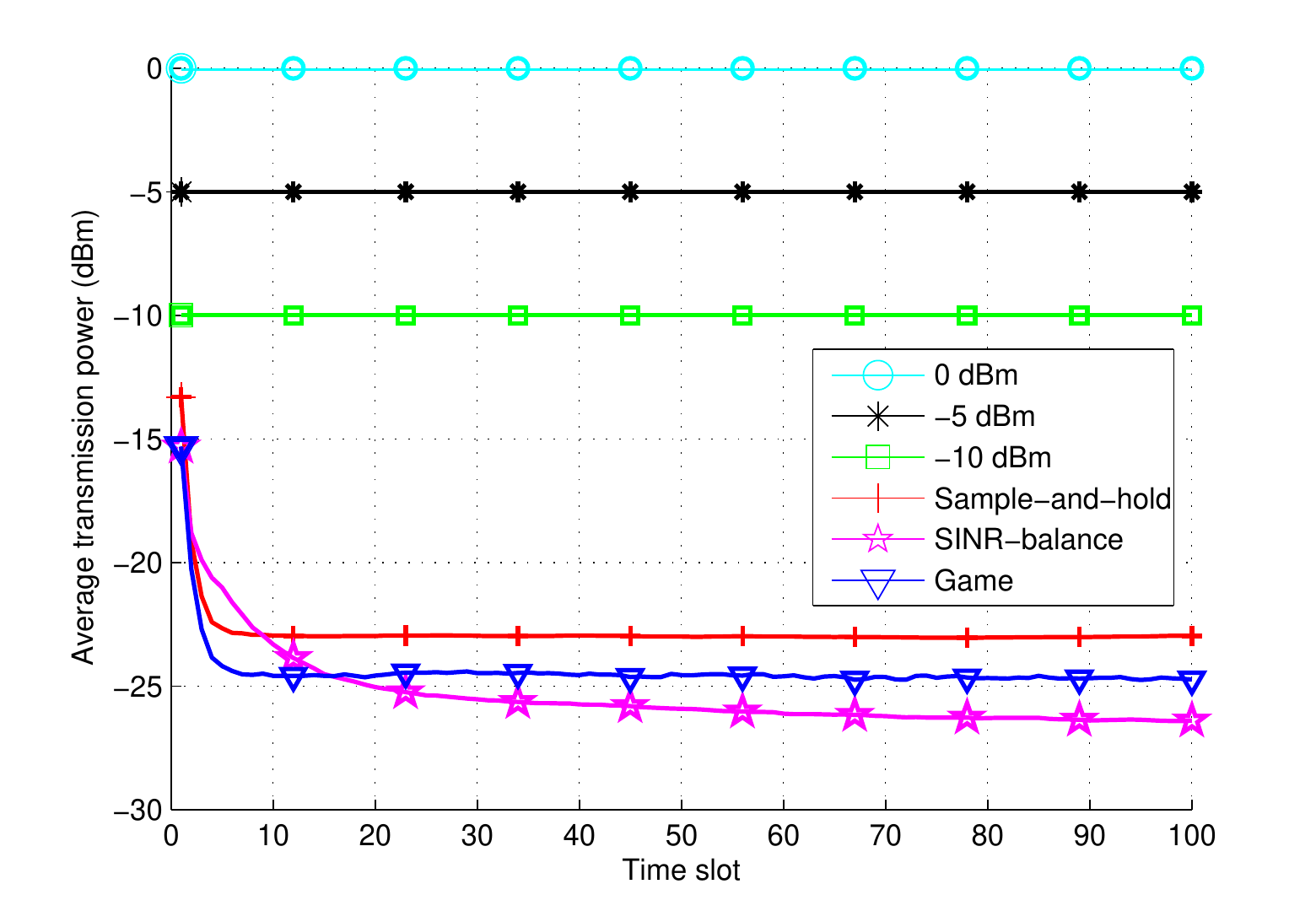}
\label{fig:subfig2b}}
\caption{DPSK Simulation results}
\label{fig:DPSK}
\end{figure}

All power control methods are applied and plotted in Fig.\ref{fig:BPSK} and \ref{fig:DPSK}, using BPSK and DPSK modulations respectively for the case where number of concurrently transmitting BANs follows the probability distribution ${P_m}$ (\ref{equ:coexistence}). From Fig.\ref{fig:subfig1a}, it is shown that using the proposed game-based power control method, approximately 93\% of the BANs are able to achieve the target PDR of 0.9 while this number is only 80\% and 77\% for Sample-and-Hold and SINR-balancing methods. Constant transmission at different power level shows similar performance to each other with about 87\% achieving target PDR. In terms of the time taken to converge to the steady-state minimized transmit power, the proposed method and Sample-and-hold achieve this 16 time slots ahead of SINR-balancing. The short convergence time of the first two algorithms ensures that they can quickly respond to time-variations in the target channel and also interference, which is typical for BAN operation. In terms of the output transmit power shown in Fig.\ref{fig:subfig1b}, the game  has an average of $-$25 dBm while sample-and-hold is about $-$23 dBm. SINR-balancing has 2 dB less in average transmission power compared with our proposed method. However, with its poor performance in percentage of BANs reaching the target and slow response time, it is not a suitable choice for enabling BANs coexistence. Further the game has at least 15 dB less average transmit power than the constant power transmission. Similar output transmit powers can be observed in Fig.\ref{fig:subfig2a} when DPSK is employed. In this case, the percentage of BANs reaching the target PDR is 92\%, 85\%, 76\% and 74\% for the proposed algorithm, constant power transmission, Sample-and-Hold and SINR-balancing respectively.

\begin{figure}[t!]
\centering
\subfigure[Percentage of BANs reaching target PDR]
{\includegraphics[width=0.8\columnwidth]{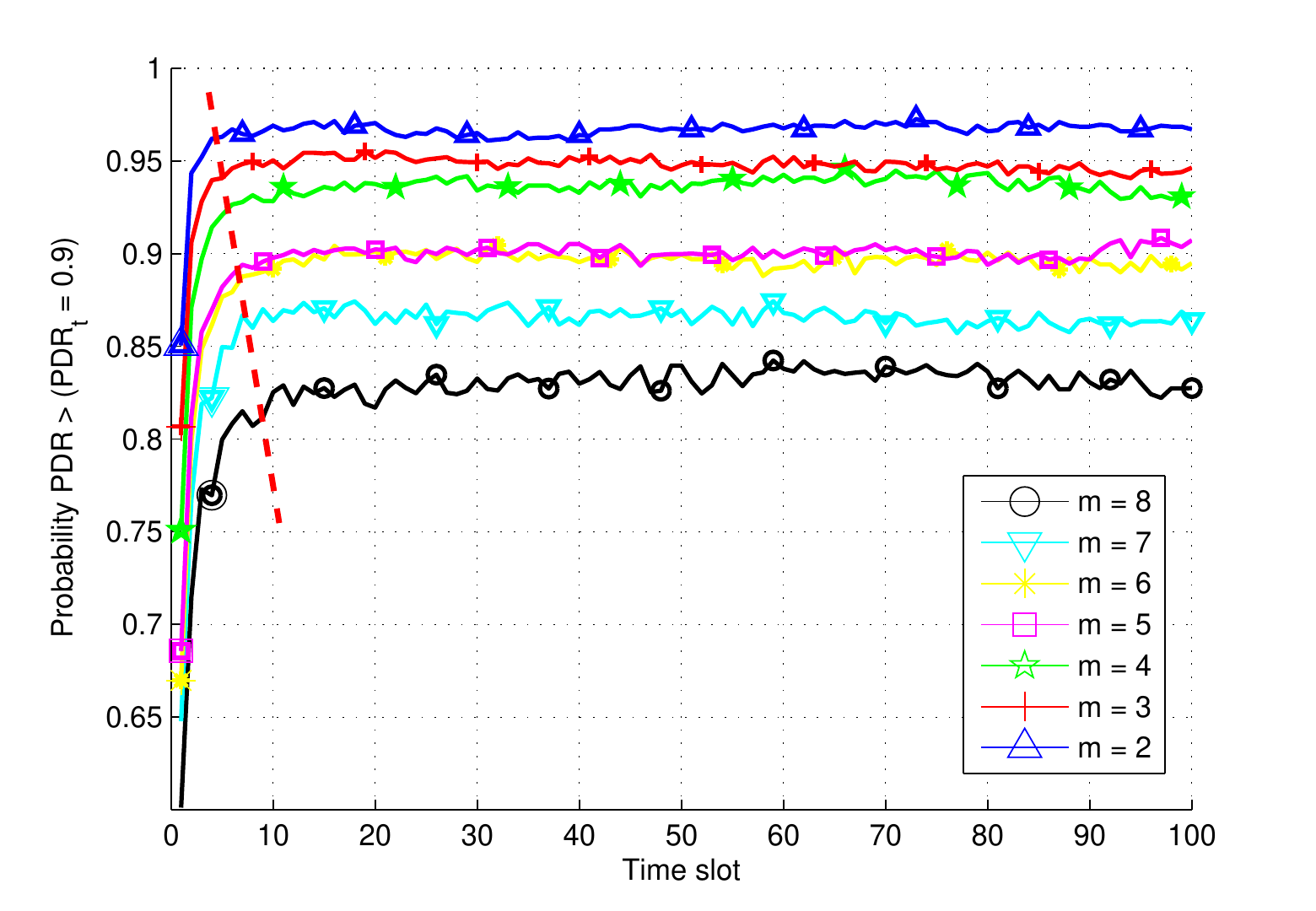}
\label{fig:overall}}
\subfigure[Average transmission power of BANs]
{\includegraphics[width=0.8\columnwidth]{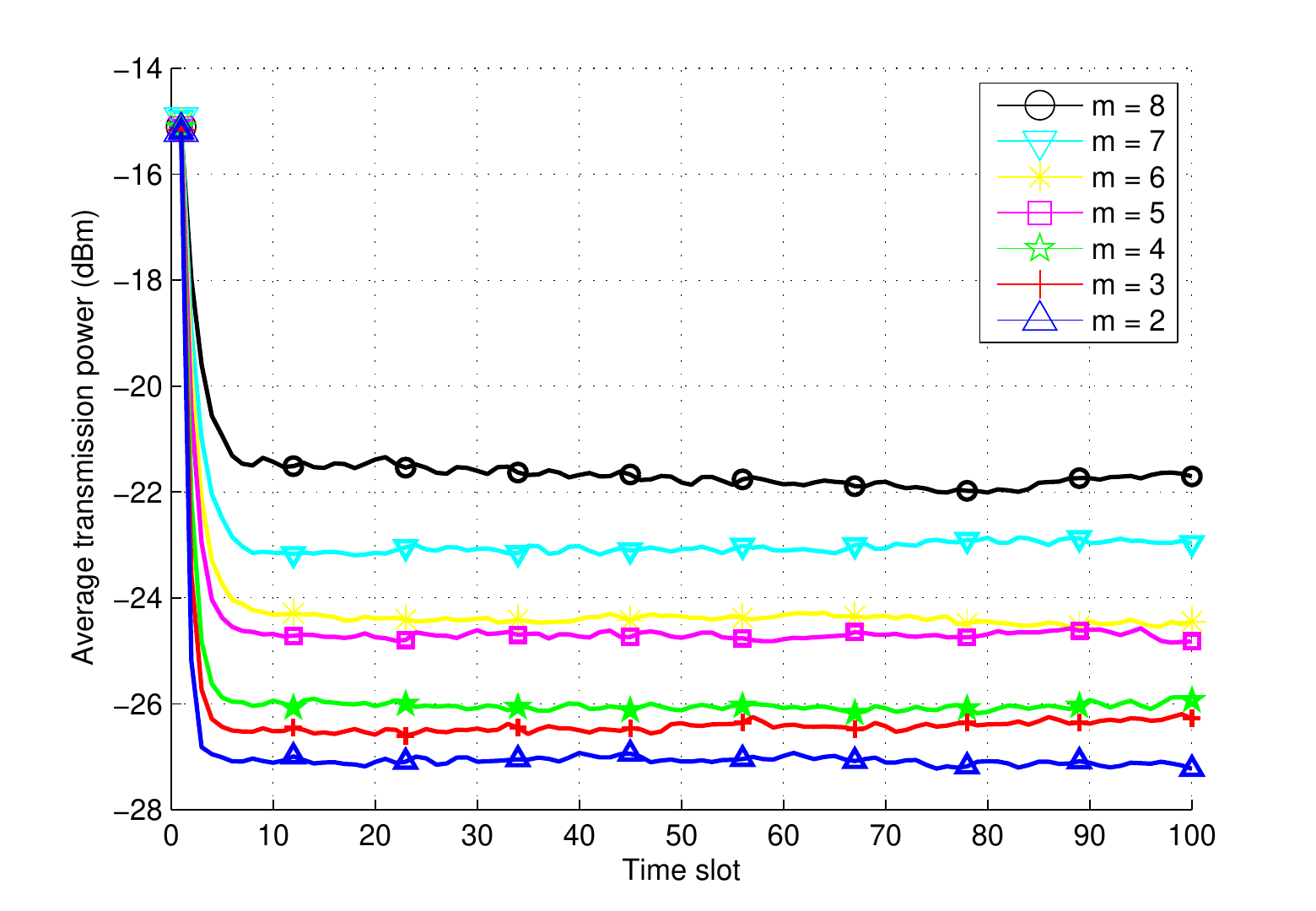}
\label{fig:overall power}}
\caption{Different number of BANs coexisting scenarios}
\label{fig:differentWBANs}
\end{figure}

Next we show the effect of changing the number of players on the performance of the proposed power control game. Fig.\ref{fig:differentWBANs} shows the comparison when the number $m$ of coexisting BANs is fixed, $m \in [2,8]$, with respect to the previous criteria. The same simulation parameters (exponents $v=4$ and $w=1$, weighting factor calculation $d_i$) are used for different values of $m$. It is observed that the average percentage of BANs reaching the target PDR of 0.9 decreases with increasing $m$, from 97\% to 83\%. In Fig.\ref{fig:overall}, the intercept point of the red broken line and each solid line indicates the approximate time slot for game convergence, which shows that the more players that join the game the longer it takes for the game to converge. In Fig.\ref{fig:overall power}, we can see that transmit power rises from $-$27 dBm to $-$21 dBm as $m$ increases from 2 to 8. Note that the performance of the proposed algorithm for different values of $m$ always outperforms any of the previously described schemes described previously in the case of the average percentage of BANs reaching the target. Although SINR balancing uses slightly smaller transmit power it sacrifices a lot in terms of reliability.
\section{Conclusion}
Wireless Body Area Networks (BANs) have been pervasively used in many areas. For these personal sensor networks, with no global coordination amongst multiple closely-located networks, there can be severe performance degradation. For better interference management, a non-cooperative power control game has been proposed, to enable coexistence amongst BANs. In this game, a novel utility function , which constrains output transmit power is applied for each player. The unique Nash Equilibrium, which is also a social\ds{ly} optimal solution, and sub-game perfect equilibrium leads to a converg\ds{ed} outcome after a \ds{small} number of stages, in terms of all BANs reaching target packet delivery ratio at the lowest possible transmit power. Based on extensive simulation over different instantiations of a realistic channel model, our proposed scheme can achieve a significantly higher number of BANs \ds{more rapidly} reaching target PDR than other power control methods that are typically employed in distributed wireless networks. In addition, the lower circuit power consumption as result of lower transmit power using the proposed game, can significantly prolong the lifetime of \ds{the battery of the} sensor radio. Finally, increasing the number of coexisting BANs \ds{only} degrades the performance of the proposed power control game by a small amount, \ds{still outperforming} other \ds{feasible} methods.

\bibliographystyle{ACM-Reference-Format-Journals}
\bibliography{refList}

%
%
%
%
%
%
%

\end{document}